\def\senbun#1(#2)#3({\@senbun(#2)(}
\def\@senbun(#1,#2)(#3,#4){%
   \@tempdima#1\p@ \advance\@tempdima#3\p@
   \divide\@tempdima\tw@
   \@tempdimb#2\p@ \advance\@tempdimb#4\p@
   \divide\@tempdimb\tw@
   \edef\@senbun@temp{\noexpand\qbezier(#1,#2)%
      (\strip@pt\@tempdima,\strip@pt\@tempdimb)(#3,#4)}%
   \@senbun@temp}
\newcommand{\N}{{\rm I\kern-.22em N}} 
\newcommand{\Z}{{\sf Z\kern-.42em Z}} 
\newcommand{\R}{{\rm I\kern-.22em R}} 
\newcommand{\BbbC}{{\rm\kern.22em\rule[.1ex]{.06em}{1.4ex}\kern-.28em C}} 
\newcommand{\BbbQ}{{\rm\kern.22em\rule[.1ex]{.06em}{1.4ex}\kern-.28em Q}}
\newcounter{Codeline}
\newcommand{\Newcodeline}{\setcounter{Codeline}{1}}
\newcommand{\Cl}{{\theCodeline}: \addtocounter{Codeline}{1}}
\newcommand{\crm}{\\}
\authorrunning{T. Okumura et.al.}
\titlerunning{Optimal Rendezvous for Asynchronous Mobile Robots with Lights}
\begin{document}

\title{Optimal Asynchronous Rendezvous \\for  Mobile Robots with Lights
}
\author{Takashi~OKUMURA\inst{1} \and Koichi~WADA\inst{2} \and Yoshiaki~KATAYAMA\inst{3}}
\institute{Graduate School of Science and Engineering, Hosei University, \\ 
Tokyo 184-8584 Japan.\\
\email{takashi.okumura.4e@stu.hosei.ac.jp}
\and
Faculty of Science and Engineering, Hosei University, \\Tokyo, 184-8485, Japan.\\ 
\email{wada@hosei.ac.jp}
\and
Graduate School of Engineering, Nagoya Institute of Technology, \\ Nagoya, 466-8555, Japan.\\
\email{katayama@nitech.ac.jp}
}
 
\maketitle

\begin{abstract}
We study a {\em Rendezvous} problem for $2$ autonomous mobile robots in asynchronous settings
with persistent memory called {\em  light}.
It is well known that Rendezvous is impossible when robots have no lights in basic common models,
even if the system is semi-synchronous. On the other hand,
Rendezvous is possible if robots have lights with a constant number of colors in several types lights\cite{FSVY,V}.
In asynchronous settings, Rendezvous can be solved by robots with $4$ colors of lights in non-rigid movement,
if robots can use not only own light but also other robot's light ({\em full-light}), where non-rigid movement means
robots may be stopped before reaching the computed destination but can move a minimum distance $\delta>0$ 
and rigid movement means robots can reach the computed destination.
In semi-synchronous settings, Rendezvous can be solved with $2$ colors of full-lights in non-rigid movement.
 
In this paper, we show that in asynchronous settings,
Rendezvous can be solved with $2$ colors of full-lights in rigid movement and
in non-rigid movement if robots know the value of the minimum distance $\delta$.
We also show that Rendezvous can be solved with $2$ colors of full-lights in general non-rigid movement
if we consider some reasonable restricted class of asynchronous settings. 


\end{abstract}

\section{Introduction}
\noindent{\bf Background and Motivation}\ \ 

The computational issues of autonomous mobile robots have been research object in distributed computing fields.
In particular, a large amount of work has been dedicated to the research of theoretical models of autonomous mobile robots
\cite{AP,BDT,CFPS,DKLMPW,IBTW,KLOT,SDY,SY}.
In the basic common setting, a robot is modeled as a point in a two dimensional plane and its capability is quite weak.
We usually assume that robots are {\em oblivious} (no memory to record past history), {\em anonymous}  and {\em uniform} (robots have no IDs and run identical algorithms)\cite{FPS}.
Robots operate in Look-Compute-Move (LCM) cycles in the model. In the Look operation, robots obtain a snapshot of the environment and
they execute the same algorithm with the snapshot as an input in the Compute operation, and move towards the computed destination in the Move operation.
Repeating these cycles, all robots perform a given task.
It is difficult for these too weak robot systems to accomplish the task to be completed. 
Revealing the weakest capability of robots to attain a given task is one of the most interesting challenges 
in the theoretical research of autonomous mobile robots.

The problem considered in this paper is {\em Gathering}, which is one of the most fundamental tasks of autonomous mobile robots. 
Gathering is the process of $n$ mobile robots, initially located on arbitrary positions, meeting within finite time at a location, not known a priori. 
When there are two robots in this setting, this task is called {\em Rendezvous}. 
In this paper, we focus on Rendezvous in asynchronous settings and we reveal the weakest additional assumptions for Rendezvous.

Since Gathering and Rendezvous are simple but essential problems, they have been intensively studied  
and a number of possibility and/or impossibility results have been shown under the different assumptions\cite{AP,BDT,CFPS,DGCMR,DKLMPW,DP,FPSW,IKIW,ISKIDWY,KLOT,LMA,P,SDY}.
The solvability of Gathering and Rendezvous depends on the activation schedule and the synchronization level. 
Usually three basic types of schedulers are identified, the fully synchronous (FSYNC), the semi-synchronous (SSYNC) and the asynchronous (ASYNC).
Gathering and Rendezvous are trivially solvable in FSYNC and the basic model.
However, these problems can not be solved in SSYNC without any additional assumptions \cite{FPS}. 



In \cite{DFPSY}, persistent memory called {\em light} has been introduced to reveal relationship between ASYNC and SSYNC and they show asynchronous robots with lights equipped with a constant number of colors, are strictly more powerful than semi-synchronous robots without lights.
In order to solve Rendezvous without any other additional assumptions, robots with lights have been introduced\cite{FSVY,DFPSY,V}.
Table~\ref{tab:Table-Rendezvous} shows results to solve Rendezvous by robots with lights in each scheduler and movement restriction.  
In the table,
{\em full-light} means that robots can see not only lights of other robots but also their own light, 
and {\em external-light} and {\em internal-light} 
mean that they can see only lights of other robots and only own light, respectively. 
In the movement restriction, 
Rigid means that robots can reach the computed destination.
In Non-Rigid,
robots may be stopped before reaching the computed destination but move a minimum distance $\delta>0$.
Non-Rigid(+$\delta$) means it is Non-Rigid and robots know the value $\delta$.
The Gathering of robots with lights is discussed in \cite{TWK}.
{\tiny
\begin{table}[h]
\centering
\caption{Rendezvous algorithms by robots with lights.}
\label{tab:Table-Rendezvous}
{\footnotesize
\begin{tabular}{|c|c|c|c|c|c|}
\hline
scheduler      & movement  & full-light & external-light & internal-light & no-light \\ \hline\hline
FSYNC & Non-Rigid
                            & \senbun(0,10)(38,-4) & \senbun(0,10)(56,-4) & \senbun(0,10)(56,-4) & $\bigcirc$  \\ \hline\hline
\multirow{3}{*}{SSYNC} & Non-Rigid     & 2 & 3        & ?        & \multirow{3}{*}{$\times$} \\ \cline{2-5} 
               & Rigid & \senbun(0,10)(38,-4)    & ?        & 6&     \\ \cline{2-5} 
               & Non-Rigid(+$\delta$) &\senbun(0,10)(38,-4)  & ?& 3 & \\ \hline\hline
\multirow{2}{*}{ASYNC}          & Non-Rigid     & 4    & ?       & ?        & \multirow{2}{*}{$\times$}  \\ \cline{2-5} 
               & Rigid & ?    & 12 & ?      &     \\ \cline{2-5}
                & Non-Rigid(+$\delta$) &?  & 3& ? & \\ \hline
\end{tabular}\\

Back slash indicates that  this part has been solved in a weaker condition.\\
$?$ menas this part is not solved.\\
}
\end{table}
}

\noindent{\bf Our Contribution}\ \ 

In this paper, we consider whether we can solve Rendezvous in ASYNC with the optimal number of colors of light.
In SSYNC,  Rendezvous cannot be solved  with one color but can be solved with $2$ colors in Non-Rigid and full-light.
On the other hand, Rendezvous in ASYNC can be solved with $4$ colors in Non-Rigid and full-light, 
with $3$ colors in Non-Rigid(+$\delta$) and external-light, or with $12$ colors in Rigid and internal-light, respectively.

In this paper we consider Rendezvous algorithms in ASYNC with the optimal number of colors of light and 
we show that Rendezvous in ASYNC can be solved with $2$ colors in Rigid and full-light, or
in Non-Rigid(+$\delta$) and full-light. We give a basic Rendezvous algorithm with $2$ colors of full-lights ($A$ and $B$) and 
it can solve Rendezvous in ASYNC and Rigid and its variant can also solve Rendezvous in ASYNC and Non-Rigid(+$\delta$).
These two algorithms can behave correctly if the initial color of each robot is $A$. However if the initial color of each robot is $B$,
the algorithm cannot solve Rendezvous in ASYNC and Rigid. 
It is still open whether Rendezvous can be solved with $2$ colors in ASYNC and Non-Rigid, however
we introduce some restricted class of ASYNC called  {\em LC-atomic} and 
we show that our basic algorithm can solve Rendezvous in this scheduler and Non-Rigid with arbitrary initial color, 
where LC-atomic ASYNC means we consider from the beginning of each Look operation to the end of the corresponding Compute operation as an atomic one, that is,  any robot cannot observe between the beginning of each Look operation and the end of each Compute one in every cycle. 
This is a reasonable sufficient condition Rendezvous is solved with the optimal number of colors of light in ASYNC and Non-Rigid.

The remainder of the paper is organized as follows. In Section
\ref{sec:model}, we define a robot model with lights, a Rendezvous problem, and terminologies. 
Section \ref{sec:PRR} shows 
the previous results for the Rendezvous problem,
and Section \ref{sec:RendezvousAlgorithms} shows Rendezvous algorithms of robots with lights 
on several situations of movement restriction. 
Section \ref{sec:conclusion}
concludes the paper.
 
\section{Model and Preliminaries}\label{sec:model}

We consider a set of $n$ anonymous mobile robots ${\cal R} = \{ r_1, \ldots, r_n \}$ located in $\R^2$.
Each robot $r_i$ has a persistent state $\ell(r_i)$ called {\em  light} which may be taken from a finite set of colors $L$. 

We denote by $\ell(r_i,t)$ the color of light the robot $r_i$ has at time $t$ and 
$p(r_i, t) \in \R^2$ the position occupied by $r_i$ at time $t$ represented in some global coordinate system. 
Given two points $p,q \in \R^2$, $dis(p,q)$ denotes the distance between $p$ and $q$.



Each robot $r_i$ has its own coordinate system where $r_i$ is located at its origin at any time. 
These coordinate systems do not necessarily agree with those of other robots. 
It means that there is no common unit of distance and no common knowledge of directions of its coordinates and clockwise orientation ({\em chirality}).


At any point of time, a robot can be active or inactive. When a robot $r_i$ is activated, it executes Look-Compute-Move operations:

\begin{itemize}
\item {\bf Look:} The robot $r_i$ activates its sensors to obtain a snapshot  which consists of pairs of a light and a position for every robot with respect to its own coordinate system. 
We assume robots can observe all other robots(unlimited visibility).
\item {\bf Compute:} The robot $r_i$ executes its algorithm using the snapshot and its own color of light (if it can be utilized) and returns a destination point $des_i$ by its coordinate system and a light $\ell_i \in L$ to which its own color is set.
\item {\bf Move:} The robot $r_i$ moves to the computed destination $des_i$.
The robot may be stopped by an adversary before reaching the computed destination. If stopped before reaching its destination, a robot moves at least a minimum distance $\delta >0$. If the distance to the destination is at most $\delta$, the robot can reach it. In this case, the movement is called {\em Non-Rigid}. Otherwise,
it is called {\em Rigid}. If the movement is Non-Rigid and robots know the value of $\delta$, it is called {\em Non-Rigid(+$\delta)$}. 
\end{itemize}




A scheduler decides which subset of  robots is activated for every configuration. 
The schedulers we consider are asynchronous and semi-synchronous and it is assumed that schedulers are {\em fair}, each robot is activated infinitely often.
\begin{itemize}
\item {\bf ASYNC:} 
The asynchronous (ASYNC) scheduler, activates the robots independently, and the duration of each Compute, Move and between successive activities is finite and unpredictable. As a result, robots can be seen while moving and the snapshot and its actual configuration are not the same and so its computation may be done with the old configuration.
\item {\bf SSYNC:} 
The  semi-synchronous(SSYNC) scheduler activates a subset of all robots synchronously  and their Look-Compute-Move cycles are performed at the same time. We can assume that activated robots at the same time obtain the same snapshot and their Compute and Move are executed instantaneously.
In SSYNC, we can assume that each activation defines discrete time called {\em round} and Look-Compute-Move is performed instantaneously  in one round. 
\end{itemize}

As a special case of SSYNC, if all robots are activated in each round,
the scheduler is called full-synchronous (FSYNC).

In this paper, we consider ASYNC and we assume the followings;

In a Look operation, a snapshot of a time $t_L$ is taken and we say that {\em Look operation is performed at time $t_L$.}
Each Compute operation of $r_i$ is assumed to be done at an instant time $t_C$ and its color of light $\ell_i(t)$ and its destination $des_i$ are assigned to the computed values at the time $t_C$. 
In a Move operation, when its movement begins at $t_B$ and ends at $t_E$, we say that its movement is performed during $[t_B.t_E]$,
its beginning and ending of the movement are denoted by $Move_{BEGIN}$ and $Move_{END}$,
and its $Move_{BEGIN}$ and $Move_{END}$ occur at $t_B$ and $t_E$, respectively.  
In the following, $Compute$, $Move_{BEGIN}$ and $Move_{END}$ are abbreviated as $Comp$, $Move_{B}$ and $Move_{E}$, respectively.
When some cycle has no movement (robots change only colors of lights, or their destinations are the current positions),
we can assume the Move operation in this cycle is omitted, since we can consider 
the Move operation can be performed just before the next Look operation. 

Also we consider the following restricted classes of ASYNC;

Let a robot execute a cycle.
If any other robot cannot execute any Look operation between the Look operation and the following Compute one in the cycle,
its ASYNC is said to be {\em $LC$-atomic}. Thus we can assume that in LC-atomic ASYNC,
Look and Compute operations in every cycle  are performed at the same time. 
If any other robot cannot execute any Look operation between the $Move_B$ and the following $Move_E$,
its ASYNC is said to be {\em $Move$-atomic}.
In this case Move operations in all cycles can be considered to be performed instantaneously and at time $t_M$.
In Move-atomic ASYNC, when a robot $r$ observes another robot  $r'$ performing a Move operation at time $t_M$,
$r$ observes the snapshot after the moving of $r'$.



In our settings, robots have  persistent lights and can change their colors at an instant time in each Compute operation. 
We consider the following robot models according to visibility of lights.
\begin{itemize}
\item {\em full-light}, the robot can recognize not only colors of lights of other robots but also its own color of light.
\item {\em external-light}, the robot can recognize only colors of lights of other robots but cannot see its own color of light. 
Note robot can change its own color.
\item {\em internal-light}, the robot can recognize only its own color of light but cannot see colors of lights of other robots.
\end{itemize}

An $n$-{\em Gathering} problem is defined that given $n (\geq 2)$ robots initially  placed in arbitrary positions  in $\R^2$,
they congregate at a single location which is not predefined in finite time.
In the following, we consider the case that $n=2$ and
the $2$-Gathering problem is called {\em Rendezvous}.

\section{Previous Results for Rendezvous}
\label{sec:PRR}
Rendezvous is trivially solvable in FSYNC but is not in SSYNC.

\begin{theorem} \label{theorem:Rand_Impo}
{\em \cite{FPS}} Rendezvous is deterministically unsolvable in SSYNC even if chirality is assumed.
\end{theorem}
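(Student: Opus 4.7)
The plan is to prove the impossibility by a standard adversarial symmetry argument, adapted to the chirality-preserving setting.

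Suppose, for contradiction, that a deterministic algorithm $A$ solves Rendezvous in SSYNC with chirality for two oblivious, anonymous robots $r_1$ and $r_2$. First I would set up a symmetric initial configuration: place $r_1$ and $r_2$ at distinct points with midpoint $M$, and choose each robot's local coordinate frame so that the $180^\circ$ rotation $\rho$ about $M$ maps $(r_1,\mathrm{frame}_1)$ onto $(r_2,\mathrm{frame}_2)$ and vice versa. Because a half-turn preserves orientation, this setup is compatible with the chirality hypothesis, so the argument does not depend on any reflection trick. Under this setup each robot's Look snapshot is identical, and by determinism, anonymity, and obliviousness the two robots compute the same destination in their own local frame, which by $\rho$-symmetry corresponds to a pair of globally $\rho$-symmetric destinations about $M$.

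Next, consider the adversary's schedule that activates both robots in every round (an FSYNC simulation inside SSYNC). An easy induction on the round number shows that the configuration is either still $\rho$-symmetric about $M$ or has collapsed to $M$ itself. By correctness of $A$, there must be a first round $k^{\star}$ in which the common computed destination is $M$. The adversary will intercept this round: instead of activating both robots, it activates only $r_1$. Then $r_1$ reaches $M$, $r_2$ stays, and the system is in a new asymmetric configuration $(M,q)$, with $q \neq M$.

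The key step, and also the main obstacle, is arguing that the adversary can iterate this trick forever. My plan is to show that from any reachable two-robot configuration there exists a choice of activation subset (both robots or exactly one) such that the next configuration still contains two distinct points, using the invariant that whenever both robots are activated the symmetry of their views forces either rendezvous at a single midpoint or preservation of $\rho$-symmetry about that midpoint, and that the adversary can always prevent the single-midpoint outcome by dropping one activation. Interleaving single-robot rounds to re-symmetrize, and keeping the schedule fair so that no robot is starved, the adversary builds an infinite execution in which $r_1$ and $r_2$ are never simultaneously at the same point, contradicting correctness of $A$.

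The trickiest point is maintaining the symmetry invariant across a single-robot activation, since local frames are fixed at the start and cannot be re-chosen on the fly. The standard resolution is to establish at the outset that the finitely many configurations through which $A$'s would-be finite execution passes are all already rotation-symmetric under the chosen initial frames, or to replace the exact symmetry invariant by the weaker ``positive inter-robot distance'' invariant that is preserved under the adversary's single-robot-intercept strategy at every candidate meeting round.
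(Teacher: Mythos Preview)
The paper does not prove Theorem~\ref{theorem:Rand_Impo}; it simply quotes it as a known result from~\cite{FPS}. There is therefore nothing in the paper to compare against, so I evaluate your proposal on its own.

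Your overall strategy is the standard one and is sound in outline, but the final paragraph leaves a genuine gap rather than closing it. Neither of your suggested fixes works as stated: the ``finitely many configurations'' idea is unfounded (the adversarial execution you are building may visit infinitely many distinct configurations with shrinking distances), and invoking a ``positive inter-robot distance'' invariant is merely restating what must be proved, not proving it. The clean resolution is simpler than you suspect: once the two local frames are chosen with orientations that differ by a $180^\circ$ rotation, the two robots have \emph{identical local views in every configuration}, not only in the initial one. Indeed, if the orientation of $F_2$ is $R_{180^\circ}$ applied to that of $F_1$, then $r_1$ sees $r_2$ at local coordinates $F_1^{-1}(p_2-p_1)$, while $r_2$ sees $r_1$ at $F_2^{-1}(p_1-p_2)=-F_1^{-1}(p_1-p_2)=F_1^{-1}(p_2-p_1)$, the same vector. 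Hence in every round, regardless of the current positions, both robots compute the same local destination, and their global destinations are symmetric about the \emph{current} midpoint. No ``re-symmetrizing'' single-robot rounds are needed; the center of symmetry simply shifts whenever the adversary activates only one robot. With this invariant in hand, the adversary rule---activate both when the common destination is not the current midpoint, otherwise activate exactly one (alternating for fairness)---keeps the two positions distinct forever and yields the desired contradiction.
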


If robots have a constant number of colors in their lights, Rendezvous can be solved shown in the following theorem (or Table~\ref{tab:Table-Rendezvous}).

\begin{theorem}
{\em \cite{FSVY,DFPSY,V}}
\begin{enumerate}
\item[(1)] Rendezvous is solved in full-light, Non-Rigid and SSYNC with $2$ colors.
\item[(2)] Rendezvous is solved in external-light, Non-Rigid and SSYNC with $3$ colors.
\item[(3)] Rendezvous is solved in internal-light, Rigid and SSYNC with $6$ colors.
\item[(4)]  Rendezvous is solved in internal-light, Non-Rigid(+$\delta$) and SSYNC with $3$ colors.
\item[(5)] Rendezvous is solved in full-light, Non-Rigid and ASYNC with $4$ colors.
\item[(6)] Rendezvous is solved in external-light, Rigid and ASYNC with $12$ colors.
\item[(7)] Rendezvous is solved in external-light, non-rigid(+$\delta$) and ASYNC with $3$ colors.
\end{enumerate} 
\end{theorem}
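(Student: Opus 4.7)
The plan is to organize the seven items by scheduler, since SSYNC ((1)--(4)) and ASYNC ((5)--(7)) differ fundamentally in whether snapshots can become stale. In every case the core idea is a two-phase commit using colors: one color, call it $A$, signals ``I am ready to move,'' another color $B$ signals ``I have committed,'' and extra colors (when present) act as guards or phase counters. I would begin with item (1), the cleanest case, and use it as a template for the rest.

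For SSYNC full-light with $2$ colors (item (1)), I would have a robot seeing both lights equal to $A$ move to the midpoint and set its own light to $B$. A robot in $A$ that sees the opponent in $B$ moves to the opponent's position (which stays fixed if that opponent does nothing), and a robot in $B$ that sees the opponent in $A$ waits; both in $B$ in the same round means both moved to the midpoint simultaneously and rendezvous is achieved. Because every robot reads its own color, symmetry can always be broken without any geometric assumption. For items (2) and (4) (SSYNC with no self-light), a robot cannot distinguish its own $A$ from $B$, so the missing bit must be encoded in the opponent's color; this costs one or two extra colors and turns the algorithm into a strict cyclic phase automaton. For item (3) the movement is Rigid so exact target positions are computable, but internal-light alone forces each phase of the midpoint/full-step routine to be split into several sub-states to be reproducible without reading the other robot, giving the count $6$.

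For the ASYNC items, the new obstacle is that a snapshot taken at time $t_L$ may differ from the configuration at the subsequent $\mathit{Comp}$ instant, so any long move based on stale data is dangerous. The standard remedy is to insert guard colors so that a robot never starts a commit move unless the opponent has already observed its previous color change; this exactly explains why more colors are required. For item (5) I would adapt the SSYNC full-light protocol into a cyclic $A\to B\to C\to D$ scheme in which each transition is authorized only after the opponent is seen in the matching color, so every large move is taken from a safe state. For item (6), the absence of self-light compounded with ASYNC roughly multiplies the SSYNC state count by the number of asynchronous guards, yielding $12$. For item (7), knowing $\delta$ lets a robot make guaranteed-progress moves of length exactly $\delta$, drastically shrinking the state space so that $3$ colors suffice. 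The main obstacle throughout, and the hardest case is (6), is showing that under an adversarial scheduler every reachable configuration drives toward gathering; this requires a careful invariant that tracks which color transitions the opponent has already witnessed. Since the seven items are previously established, I would conclude by citing \cite{FSVY,DFPSY,V} for the detailed case analyses.
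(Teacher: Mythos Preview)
The paper offers no proof of this theorem whatsoever: it appears in Section~\ref{sec:PRR} (``Previous Results for Rendezvous'') purely as a citation of prior work \cite{FSVY,DFPSY,V}. Your closing sentence---citing those references---is therefore exactly what the paper does, and everything preceding it is superfluous for the purpose of matching the paper.

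That said, your sketches are not all accurate as stand-alone arguments. For item~(1) you claim that ``both in $B$ in the same round means both moved to the midpoint simultaneously and rendezvous is achieved,'' but under Non-Rigid movement each robot is only guaranteed to advance $\delta$, so both may stop short of the midpoint with lights $B$; the protocol then cycles $B\to A\to B$ and shrinks the distance by at least $2\delta$ per round, which is the actual termination argument. Your heuristic explanations for the color counts in items~(3), (5), (6) are plausible narratives but not derivations; the specific numbers $6$, $4$, $12$ come from the concrete automata in the cited papers, not from a multiplicative accounting of ``guards.'' If you intend these sketches to stand in for proofs, each would need the full case analysis from the original sources; if you intend only to motivate the citation, a single sentence suffices.
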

It is still an open problem that Rendezvous is solved in ASYNC with $2$ colors.
In the following, we will show that Rendezvous is solved  in ASYNC and full-light with $2$ colors,
if we assume (1) Rigid movement, (2) Non-Rigid movement and knowledge of the minimum distance $\delta$ robots move, (3) LC-atomic.  
In these cases, we can construct optimal Rendezvous algorithms with respect to the number of colors in ASYNC.

\section{Asynchronous Rendezvous Algorithms for Robots with Lights}
\label{sec:RendezvousAlgorithms}

\subsection{Basic Rendezvous Algorithm}

In this section, two robots are denoted as $r$ and $s$.
Let $t_0$ be the starting time of the algorithm. 

Given a robot $robot$, an operation $op$($\in \{ Look, Comp, Move_{B}, Move_{E} \}$), and a time $t$,
$t^+(robot, op)$ denotes the time $robot$ performs $op$ immediately after $t$ if there exists such operation, and
$t^-(robot, op)$ denotes the time $robot$ performs $op$ immediately before $t$  if there exists such operation.
If $t$ is the time the algorithm terminates, $t^+(robot, op)$ is not defined for any $op$.
When $robot$ does not perform $op$ before $t$ and $t^-(robot, op)$ does not exist, $t^-(robot, op)$ is defined to be $t_0$.

A time $t_c$ is called a {\em cycle start time}, if the next performed operations of both $r$ and $s$ after $t$ are Look ones, or
otherwise, the robots performing the operations neither change their colors of lights nor move.
In the latter case, we can consider that these operations can be performed before $t_c$ and the subsequent $Look$ operation can be performed as the first operation after $t_c$.

\Newcodeline
\begin{algorithm}[h]
\caption{Rendezvous (scheduler, movement, initial-light)}
\label{algo:Ren}
{\footnotesize
\begin{tabbing}
111 \= 11 \= 11 \= 11 \= 11 \= 11 \= 11 \= \kill
{\em Parameters}: scheduler, movement-restriction, Initial-light \crm
{\em Assumptions}: full-light, two colors ($A$ and $B$) \crm

\Cl \> {\bf case} me.light   {\bf of } \crm

\Cl \> $A$: \crm
\Cl \> \> {\bf if} other.light =$A$ {\bf then} \crm
\Cl \> \> \>$me.light \leftarrow B$ \crm
\Cl \> \> \>$me.des \leftarrow$ the midpoint of $me.position$ and $other.position$\crm
\Cl \> \>{\bf else} $me.des \leftarrow other.position$ \crm 
\Cl \> $B$: \crm
\Cl \>\> {\bf if} $other.light = A$ {\bf then} \crm
\Cl \>\>\>$me.des \leftarrow me.position$ // stay\crm
\Cl \> \> {\bf else} $me.light \leftarrow A$ \crm 

\Cl \> {\bf endcase} 
\end{tabbing}
}
\end{algorithm}

Algorithm~\ref{algo:Ren} is used as a basic Rendezvous algorithm 
which has three parameters, schedulers, movement restriction and an initial color of light and
assumes full-light and uses two colors $A$ and $B$. 

We will show that Rendezvous(ASYNC, Rigid, A) and Rendezvous(LC-atomic ASYNC, Non-Rigid, any) solve Rendezvous 
and some variant of Rendezvous(ASYNC, Non-Rigid(+$\delta$), A) also solves Rendezvous.

Algorithm~\ref{algo:Ren} behaves as follows.

When both colors of $r$ and $s$ are $A$, they change their colors into $A$ and they move to the midpoint of the two current positions,
when one's (say $r$) color is $A$ and the other's ($s$) color is $B$, $s$ stays at the current position and $r$ moves to the position, and
when both colors of the two robots are $B$, they change their colors into $A$.

It is easily verified that Rendezvous(SSYNC, Non-Rigid ,any) solves Rendezvous.
However, it is not trivial to prove whether this algorithm works well in ASYNC or not. 
In fact,
Rendezvous(ASYNC, Rigid, B) can not work correctly, which we will show later.

The following two lemmas are useful for proving the correctness.
The first one is easily verified. and
note they hold for Non-Rigid movement.

\begin{lemma}\label{BBdontmove}
Assume that time $t_c$ is a cycle start time and $\ell(r,t_c)=\ell(s,t_c)=B$ in Rendezvous(ASYNC, Non-Rigid, any). 
If $dis(p(r,t_c)),p(s,t_c))=0$,
then two robots $r$ and $s$ do not move after $t_c$.
\end{lemma}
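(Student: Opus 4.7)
The plan is to establish the stronger claim that, if both robots occupy the same position at a cycle start time $t_c$, then every computed destination during every subsequent Compute is the computing robot's current position (or no destination at all), so no actual movement ever takes place. The hypothesis $\ell(r,t_c)=\ell(s,t_c)=B$ is not actually needed for this conclusion, but I will retain it for consistency with the lemma statement.

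First I would fix notation and exploit the definition of cycle start time: at $t_c$ neither $r$ nor $s$ is in the middle of a pending movement that would change its position, and both are located at the common point $p:=p(r,t_c)=p(s,t_c)$. Next I would argue by induction over the linearly ordered sequence of $Move_E$ events after $t_c$, claiming that no such event changes a robot's position. Equivalently, I would suppose for contradiction that there is a first actual movement, carried out by some robot (say $r$) during a Move operation $M$, with corresponding preceding $Look$ $L$ and $Comp$ $C$ in the same cycle. Since $M$ is the first actual movement after $t_c$, no robot has moved between $t_c$ and the start of $M$; in particular at $L$ both robots are still at $p$, and $r$'s own coordinate system has been unchanged throughout the cycle.

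Then I would do the case analysis on the snapshot $r$ took at $L$, where both robots sit at the origin of $r$'s coordinates. Going through Algorithm~\ref{algo:Ren}: if $\ell(r)=\ell(\text{other})=A$ the destination is the midpoint of two coincident points, which is the origin; if $\ell(r)=A$ and $\ell(\text{other})=B$ then $me.des\gets other.position$, again the origin; if $\ell(r)=B$ and $\ell(\text{other})=A$ then $me.des\gets me.position$, i.e.\ the origin; and if $\ell(r)=\ell(\text{other})=B$ no destination is set, so by the paper's convention the Move is omitted. In every case the Move $M$ leaves $r$ at $p$, contradicting the assumption that $M$ actually moves $r$. The same argument applies to $s$, completing the induction.

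The only real subtlety, and the step I would be most careful about, is the asynchrony between $L$, $C$, and $M$: one must observe that because $M$ is chosen as the \emph{first} actual motion after $t_c$, no other robot can have moved in the interim either, so the snapshot seen at $L$ truly shows both robots at $p$ and $r$'s local coordinate frame coincides at time $M$ with its frame at time $L$. Once this is spelled out, the case analysis is routine and the lemma follows — and indeed this also justifies the remark that the colors at $t_c$ are irrelevant, the coincidence of positions alone being responsible for permanent immobility.
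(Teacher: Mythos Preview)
Your argument is correct. The paper itself omits the proof entirely, remarking only that the lemma ``is easily verified''; your minimal-counterexample reduction followed by a four-case inspection of Algorithm~\ref{algo:Ren} is precisely the routine verification the authors had in mind, and your additional observation that the hypothesis $\ell(r,t_c)=\ell(s,t_c)=B$ is superfluous is accurate and harmless.
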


\begin{lemma}\label{ABRendezvous}
Let robot $r$ perform Look operation at time $t$ in Rendezvous(ASYNC, Non-Rigid, any).
If $t^-(s, Comp) \leq t$ and $\ell(r,t) \neq \ell(s,t)$,
then there exists a time $t^* (> t)$ such that $r$ and $s$ succeed  in rendezvous at time $t^*$ by Rendezvous(ASYNC, Non-Rigid, any).
\end{lemma}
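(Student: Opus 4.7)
The plan is to handle the two sub-cases $\ell(r,t)=A,\ell(s,t)=B$ and $\ell(r,t)=B,\ell(s,t)=A$ in parallel. In both, the $A$-colored robot plays the role of ``chaser'' and the $B$-colored robot plays the role of ``stayer''. I spell out the first sub-case in detail; the second follows by interchanging the roles of $r$ and $s$ everywhere.

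First I would establish a color-stability invariant: for every $t'\geq t$, $\ell(r,t')=A$ and $\ell(s,t')=B$. This is proved by induction on the sequence of Compute operations executed by $r$ or $s$ after $t$. Inspection of Algorithm~\ref{algo:Ren} shows that $r$ can leave color $A$ only in the branch ``case $A$, other $=A$'', and $s$ can leave color $B$ only in the branch ``case $B$, other $=B$''. Since colors change only at Compute times, the snapshot taken at each Look is consistent with the invariant up to that point, so whichever robot performs the next Compute takes the opposite branch (``case $A$, other $=B$'' for $r$, or ``case $B$, other $=A$'' for $s$) and preserves the invariant.

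Granted the invariant, the motion analysis is almost mechanical. Every Compute of $s$ after $t$ falls in ``case $B$, other $=A$'', setting $me.des\leftarrow me.position$, so $s$ performs no non-trivial Move after its Compute at $t^-(s,Comp)$. The Move dispatched by that Compute terminates within a finite time $T_1\geq t$, after which $s$ remains permanently at some fixed point $Q_s$. Every Compute of $r$ after $t$ falls in ``case $A$, other $=B$'', setting $me.des\leftarrow other.position$, so $r$ always aims at $s$'s position at $r$'s preceding Look.

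The conclusion then follows from a standard Non-Rigid convergence argument. By fairness, $r$ performs infinitely many cycles, and each cycle whose Look happens after $T_1$ yields destination $Q_s$. The Non-Rigid property guarantees that the corresponding Move either advances $r$ by at least $\delta$ toward $Q_s$ or reaches $Q_s$ outright, so after finitely many such cycles $r$ arrives at $Q_s$ at some time $t^*>t$, at which instant both robots occupy $Q_s$, giving rendezvous. The main obstacle I expect to address carefully is the pending Move of $s$ dispatched at $t^-(s,Comp)$: while that Move is in progress, the position $p(s,t_r^L)$ that $r$ takes as its target drifts with $s$, so one cannot claim convergence before $T_1$. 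The remedy is to observe that this drift lasts only the finite duration of one Move, after which the clean post-$T_1$ argument above applies.
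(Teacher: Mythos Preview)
Your approach is essentially the paper's: both argue that once the colors differ they never change again (your ``color-stability invariant'' is what the paper phrases as ``both $r$ and $s$ do not change the colors''), after which the $B$-robot stays put and the $A$-robot is driven to it by repeated Non-Rigid moves of length at least $\delta$. The paper's write-up is a direct two-case analysis rather than an explicit induction, but the content is the same, including your handling of the stayer's pending Move.

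One point to tighten: the two sub-cases are \emph{not} obtainable from one another by swapping $r$ and $s$, because the hypothesis is asymmetric --- it is $r$ that performs the Look at $t$, and the side condition $t^-(s,Comp)\le t$ is on $s$. The paper exploits this asymmetry: when $\ell(r,t)=B,\ \ell(s,t)=A$ the stayer is $r$, who has just Looked and hence is immediately at rest at $p(r,t)$ with no pending Move to wait out; only when $\ell(r,t)=A,\ \ell(s,t)=B$ must one argue (as you do) that the stayer $s$ first finishes its in-flight Move before the clean convergence argument applies. Your more general argument still covers the simpler case, but the ``interchanging the roles of $r$ and $s$ everywhere'' shortcut is not literally valid. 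Relatedly, the base of your induction implicitly uses the hypothesis on $s$: without it, the first Compute of $s$ after $t$ could act on a snapshot taken \emph{before} $t$, at which time $r$'s color is not governed by your invariant; you should make that dependence explicit.
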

\begin{proof}

If $\ell(r,t)=B$ and $\ell(s,t)=A$, then $r$ does not change the color and stays at the current position.
If $s$ performs a Look operation at $t^+(s, Look)$, $s$ does not change the color and the $s$' destination is $p(r,t)$.
Since both $r$ and $s$ do not change the colors after the time $t^+(s, Look)$,
$r$ stays at $p(r,t)$ and the destination of $s$ is $p(r,t)$.
Thus $r$ and $s$ succeed in rendezvous at some time $t^* \geq t^+(s, Move_{E})$ even in Non-Rigid.

If $\ell(r,t)=A$ and $\ell(s,t)=B$, then $r$ does not change the color and computes the destination as $p(s,t)$.
When $s$ finishes the Move operation at $t'=t^+(s,Move_{E})$, $s$ is located  at $p(s,t')$. 
If $t' \leq t$, since $r$'s destination is $p(s,t)$ and $p(s,t)$ is not changed (even if $s$ performs $Look$ operation after $t'$ and before $t$), 
$r$ and $s$ succeed in rendezvous at some time $t^* \geq t^+(r, Move_{E})$. 

Otherwise ($t <t'$), 
if $r$ performs Look operations before $t'$, these destinations are different because $s$ is moving, but the color is not changed and $\ell(r,t')=A$.
Since  $s$ stays at $p(s,t')$ after $t'$, $r$ and $s$ succeed in rendezvous at some time $t^* \geq t^+(r, Move_{E})$. 

In both cases $r$ and $s$ do not move after $t^*$ by the algorithm.
\end{proof}

\subsubsection{ASYNC and Rigid movement \\}

If Rigid movement is assumed, asynchronous Rendezvous can be done with $2$ colors.
\begin{figure*}
 \begin{center}
    \includegraphics[scale=0.8]{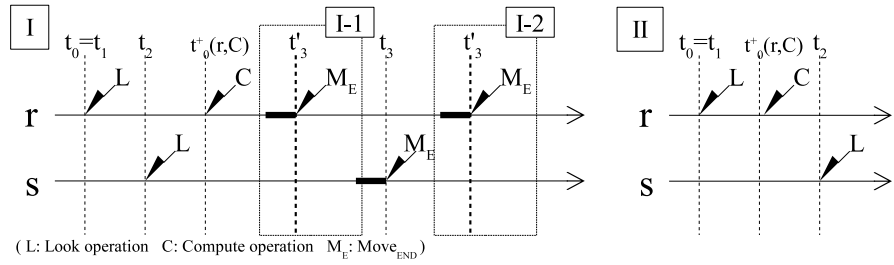}
    \caption{Several cases in the proof of Theorem~\ref{AsyncRigidA}}
    \label{figCaseTheorem3}
  \end{center}
\end{figure*}

\begin{theorem}
\label{AsyncRigidA}
Rendezvous(ASYNC, Rigid, A) solves Rendezvous.
\end{theorem}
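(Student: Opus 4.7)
The plan is to proceed by case analysis on the relative timing of the two robots' first Look operations, using Lemma~\ref{ABRendezvous} as the principal tool. Without loss of generality assume $r$ performs its first Look at time $t_L^r$ with $t_L^r\le t_L^s$, where $t_L^s$ is $s$'s first Look. Since both initial lights are $A$ and no $Comp$ has occurred yet, $r$ observes $(\ell(r),\ell(s))=(A,A)$ at $t_L^r$, so $r$'s first $Comp$ at time $t_C^r$ sets $\ell(r)\leftarrow B$ and chooses the midpoint $M$ of the initial positions as $r$'s destination. In the easy subcase $t_L^s\ge t_C^r$, the robot $s$ observes $\ell(r)=B$ while its own light is still $A$; since $t^-(r,Comp)=t_C^r\le t_L^s$ and the colors differ, Lemma~\ref{ABRendezvous} applied to $s$ immediately yields rendezvous.

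The harder subcase is $t_L^s<t_C^r$. Here no $Comp$ has completed by $t_L^s$, so no Move has begun and the positions observed at $t_L^s$ coincide with those at $t_L^r$; consequently $s$ also sees $(A,A)$ and computes the same midpoint $M$, setting $\ell(s)\leftarrow B$. Now the Rigid assumption is used crucially: both robots are guaranteed to reach $M$ at the end of their first Move. I then inspect the possible interleavings of the subsequent second cycles and argue that every interleaving resolves to either (i)~some later Look that observes distinct lights with the partner's most recent $Comp$ already preceding it, which invokes Lemma~\ref{ABRendezvous}, or (ii)~a cycle-start-time configuration with both robots at $M$ and both lights $B$, which invokes Lemma~\ref{BBdontmove} to pin them at $M$.

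The main obstacle I expect is the enumeration itself in the harder subcase. Once $r$ (say) has reached $M$ with color $B$, it may perform a second Look seeing $(B,B)$ and reset $\ell(r)\leftarrow A$ while $s$ is still in its first Move; it may even run a further $(A,B)$-cycle that temporarily dispatches it toward a snapshot position of the still-moving $s$. The figure \texttt{theorem3.eps} referenced above appears to catalogue precisely these scenarios. For each of them I would track positions and lights carefully and use Rigid movement to argue that any such excursion is followed by a later Look with mismatched colors after the partner has just computed, so that Lemma~\ref{ABRendezvous} eventually triggers rendezvous; in the symmetric scenarios in which both robots synchronously perform $(B,B)\to(A,A)$ transitions at $M$ without any intervening asymmetric observation, both remain pinned at $M$ because the midpoint of $(M,M)$ is $M$ and every subsequent $(A,A)\to(B,B)$ recomputation keeps them there.
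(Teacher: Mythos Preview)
Your decomposition matches the paper's exactly: the same case split on whether $s$'s first Look occurs before or after $r$'s first Comp, with the later case dispatched by Lemma~\ref{ABRendezvous} and the earlier case sending both robots to the common midpoint $M$ with color $B$.

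Where you diverge from the paper is in the hard subcase, and there you are doing more work than necessary. The paper's analysis of that subcase is short: order the two Move-end times, say $r$ arrives at $M$ first at $t_3'$ and $s$ at $t_3\ge t_3'$. During $[t_3',t_3]$ only $r$ can act. If $r$ does nothing (or only a Look), then $t_3$ is a cycle start time with both at $M$ and both colored $B$, and Lemma~\ref{BBdontmove} finishes. If $r$ performs a full Look--Comp, note that whatever $r$ sees, $r$'s own color is $B$, so by the algorithm this cycle involves \emph{no movement}; at worst $r$ switches its color to $A$ at some time $t_C\in[t_3',t_3]$. Now the colors differ, and the very next Look by either robot (whether it occurs before or after $t_3$) satisfies the hypothesis of Lemma~\ref{ABRendezvous} directly.

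That is the step you are overcomplicating. You anticipate an ``$(A,B)$-cycle that temporarily dispatches $r$ toward a snapshot position of the still-moving $s$'' and plan to track that excursion and wait for a \emph{later} mismatched Look. But that excursion \emph{is} precisely the Look to which Lemma~\ref{ABRendezvous} already applies: its proof explicitly handles the case $\ell(r,t)=A$, $\ell(s,t)=B$ with $s$ still mid-Move (the sub-case $t<t'$ there). So there is nothing further to enumerate once a color asymmetry appears after the first Comp of each robot; invoke the lemma immediately. Likewise, your discussion of ``symmetric $(B,B)\to(A,A)$ transitions at $M$'' is re-deriving the content of Lemma~\ref{BBdontmove} rather than citing it. Your plan is not wrong, but it promises a case enumeration that the paper avoids entirely by these two observations.
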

\begin{proof} 
Let $t_0$ be the starting time of the algorithm and let $r$ and $s$ be two robots whose colors of lights are $A$.
Without loss of generality, $r$ is assumed to perform $Look$ operation first at time $t_1$, that is, $t_1=t_0^+(r, Look)$.
Let $t_2=t_0^+(s,Look)$ and there are two cases;
(I) $t_1 \leq t_2 < t_0^+(r, Comp)$, (II) $t_0^+(r, Comp) \leq t_2$ (Fig.~\ref{figCaseTheorem3}).
\vspace{3mm}

(I) Since $\ell(s,t_2)=A$ and $\ell(r,t_2)=A$, $s$ move to the midpoint of $p(r,t_0)$ and $p(s,t_0)$ at time $t_3=t_0^+(s,Move_{E})$.
Robot $s$ changes its color of light from $A$ to $B$ at time $t_0^+(s, Comp)$ and $\ell(s,t_3)=B$.
There are two cases (I-1) $t_3'=t_0^+(r, Move_{E}) < t_3$ and (I-2) $t_3 \leq t_3'=t_0^+(r, Move_{E})$.

(I-1) Robot $r$ reaches to the destination at time $t_3'$ but $s$ does not reach the destination ($t_3'=t_0^+(r, Move_{E}) < t_3$).
If $r$ does not perform any operations during $[t_3', t_3]$, $t_3$ becomes a cycle start time and 
then $r$ and $s$ rendezvous  at time $t_3$ and the two robots do not move after $t_3$ by Lemma~\ref{BBdontmove}.

Otherwise, $r$ performs several operations during $[t_3', t_3]$. 
If $r$ performs at least one $Look$ operation and one $Comp$ operation during $[t_3', t_3]$ and let $t_C$ be the time $r$ performs $Comp$ operation ($t_C=t_3'^+(r,Comp)$).
Note that $r$ only changes its color of light and does not move in this cycle. 
Then its color of light is changed to A at $t_C$ and $\ell(r,t_C)=A$ and $\ell(s,t_C)=B$. 
Thus, the next Look operation of $r$ or $s$ after $t_C$ satisfies the conditions of  Lemma~\ref{ABRendezvous}, $r$ and $s$ succeed in rendezvous.
The remaining case is that $r$ performs only $Look$ operation during $[t_3', t_3]$.
Let $t_{L}$ be the time $r$ performs the $Look$ operation.
Since $r$ observes $\ell(s,t_{L})=B$ and
$r$ and $s$ are located at the same point at $t_3$,
this case is the same as the first case.

(I-2) Interchanging the roles of $r$ and $s$, this case can be reduced to (I-1).

(II) Since $(t_0)^+(r, Comp) \leq t_2$ and $\ell(r,t_2) \neq \ell(s,t_2)$, $r$ and $s$ succeed in rendezvous by Lemma~\ref{ABRendezvous}.
\end{proof}

Note that this algorithm does not terminate and we cannot change the algorithm so that the fixed one can terminate.
It is an open problem whether there exists an algorithm which solves Rendezvous and terminates with two colors in ASYNC.  
Also there exists an execution that Rendezvous(Async, Rigid, any) does not work in general. In fact,
if initial colors of lights for both robots are B, this algorithm cannot solve Rendezvous.
Fig.~\ref{figExecutionAsyncRigidB} shows a counterexample Rendezvous(Async, Rigid, B) does not work.
Since the colors of lights at $t_5$ are B, this execution repeats forever and achieves only convergence, that is, the robots move arbitrarily close to each other, but might not rendezvous within finite time.
This counterexample also shows Rendezvous(Move-atomic ASYNC, Rigid, B) does not work.
However, if we assume LC-atomic ASYNC, we can show that Rendezvous(LC-atomic ASYNC, Rigid, B) solves Rendezvous.

\begin{figure*}
 \begin{center}
    \includegraphics{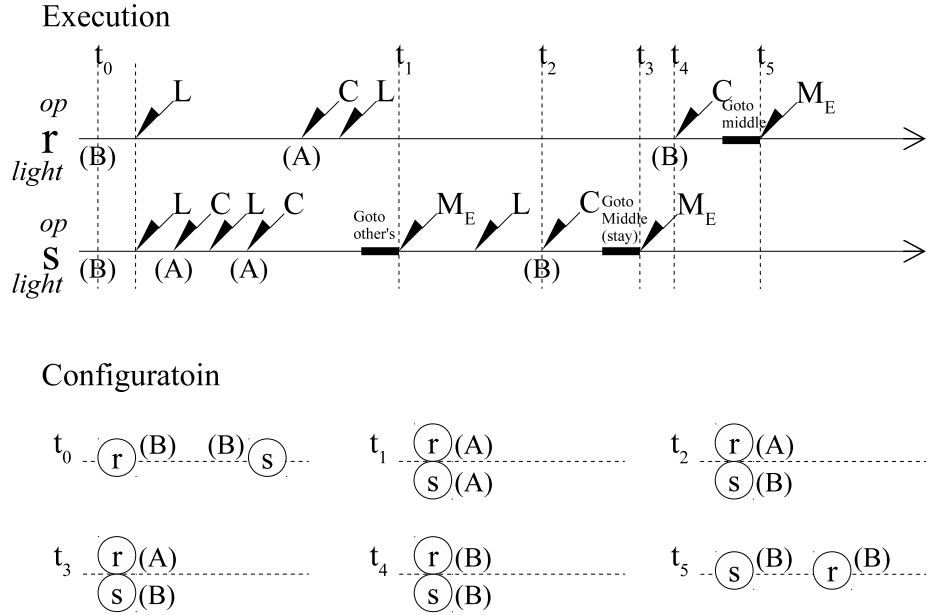}
    \caption{Rendezvous(ASYNC, Rigid, B) cannot solve Rendezvous in general.}
    \label{figExecutionAsyncRigidB}
  \end{center}
\end{figure*}

\begin{lemma} \label{RendInitB}
Rendezvous(LC-atomic ASYNC, Rigid, B) solves Rendezvous.
\end{lemma}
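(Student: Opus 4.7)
The plan is to trace the execution from the all-$B$ initial configuration and show that the first Look+Compute events drive the system into an asymmetric $\{A,B\}$ state to which Lemma~\ref{ABRendezvous} applies, or, in the tie-breaking case, into an all-$A$ state handled by Theorem~\ref{AsyncRigidA}. Under LC-atomic ASYNC the Look and Compute of every cycle coincide in time, so I may treat them as a single atomic ``Look+Compute'' event and regard the snapshot each robot sees as the true current configuration.

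Let $t_1$ be the time of the first such event after $t_0$, and assume first that exactly one robot, say $r$, performs it. Since $s$ has performed no operation before $t_1$, $\ell(s,t_1)=B$ and $p(s,t_1)=p(s,t_0)$. At $t_1$, $r$ has $me.light=B$ and observes $other.light=B$, takes the \textbf{else} branch of case~$B$, and assigns $me.light\leftarrow A$ without choosing a destination, so $\ell(r,t_1)=A$ and $r$ does not move. Let $t_2$ be $s$'s first Look+Compute time. On the open interval $(t_1,t_2)$, $s$ is idle, so $\ell(s,t_2)=B$. Any Look+Compute of $r$ in that interval has $me.light=A$ and observes $other.light=B$; it therefore falls into the \textbf{else} branch of case~$A$, which merely sets $me.des\leftarrow other.position$ and leaves $me.light$ unchanged. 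Rigid Moves of $r$ during $(t_1,t_2)$ may even bring $r$ all the way to $p(s,t_0)$, but its colour is never altered, so $\ell(r,t_2)=A$.

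At $t_2$, $s$ performs a Look with $\ell(s,t_2)=B\neq A=\ell(r,t_2)$, and $r$ has already performed a Compute at $t_1<t_2$. Applying Lemma~\ref{ABRendezvous} with the roles of $r$ and $s$ swapped yields a finite $t^*\ge t_2$ at which the two robots coincide and after which they remain stationary; the lemma is stated for Non-Rigid movement but applies a fortiori under Rigid, where the adversary is strictly weaker. In the tie-breaking case where both first Look+Compute events occur simultaneously at $t_1$, each robot sees the other still at its initial colour $B$, both take the \textbf{else} branch of case~$B$ and recolour themselves to $A$ without moving; from $t_1$ onward the execution is an instance of Rendezvous(ASYNC, Rigid, $A$) carried out under the LC-atomic schedule, which rendezvouses by Theorem~\ref{AsyncRigidA} since LC-atomic ASYNC is a restriction of ASYNC.

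The hard part---and the only place where LC-atomicity is genuinely used---is the claim $\ell(r,t_2)=A$. In unrestricted ASYNC, $s$ could begin its first cycle by taking its Look before $t_1$, observing $r$ with its initial colour $B$, and postponing its Compute until after $t_1$; acting on that stale snapshot, $s$ would execute the $B$/$B$ branch and recolour itself to $A$, breaking the asymmetry before the lemma can be applied and enabling the sort of nonterminating behaviour exhibited by the counterexample in Figure~\ref{figExecutionAsyncRigidB}. LC-atomicity forbids exactly this separation of Look from Compute, which is what makes the argument go through here.
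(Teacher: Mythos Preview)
Your proof is correct and follows essentially the same route as the paper's: in LC-atomic ASYNC you treat each Look+Compute as a single event, case-split on whether the two robots' first such events coincide, and invoke Theorem~\ref{AsyncRigidA} in the simultaneous case and Lemma~\ref{ABRendezvous} (with roles swapped) in the non-simultaneous case. Your version is slightly more detailed---you explicitly justify $\ell(r,t_2)=A$ by observing that any intermediate cycles of $r$ fall into the $A/B$ branch and leave the colour untouched, whereas the paper simply records $\ell(r,t_1)=A$---and your closing paragraph explaining where LC-atomicity is actually used is a helpful remark, but neither changes the argument.
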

\begin{proof}
In LC-atomic ASYNC, any Look operation and the following Comp operation are performed at the same time and this operation is denoted as LC.
Let $t_c$ be a cycle start time and let $r$ perform an LC operation first and let $t_1=t_c^+(r,LC)$.
There are two cases;
(I) $t_1=t_c^+(s,LC)$, and (II) $t_1< t_c^+(s,LC)=t_2$.

(I) In this case, since $t_1$ becomes a cycle start time and $\ell(r,t_1)=\ell(s,t_1)=A$,  this lemma holds by Theorem~\ref{AsyncRigidA}.

(II) In this case, since $\ell(r,t_1)=A$ and $\ell(s,t_2)=B$,  this lemma holds by Lemma~\ref{ABRendezvous}.
\end{proof}
Since it is easily verified that there is a cycle start time $t_c (\geq t_0)$ such that $\ell(r,t_c)=\ell(s,t_c)=B$ 
in an execution of Rendezvous(Async, Non-Rigid, A), it cannot solve Rendezvous even if both initial colors of lights are $A$. 
In the next subsection, we will show if ASYNC is restricted to LC-atomic one,
Rendezvous can be solved in Non-Rigid with two colors from any initial colors of lights.

\subsubsection{LC-atomic ASYNC and Non-Rigid movement\\}

Let $t_c$ be a cycle start time of the algorithm.
There are three cases  according to the colors of lights of two robots $r$ and $s$,
(I) $\ell(r,t_c) \neq \ell(s,t_c)$, (II) $\ell(r,t_c)=\ell(s,t_c)=A$, and (III) $\ell(r,t_c) = \ell(s,t_c)=B$

\begin{lemma}
\label{LCatomicAB}
If $\ell(r,t_c) \neq \ell(s,t_c)$ and the algorithm starts at $t_c$, then  there exists a time $t^* (\geq t)$ such that $r$ and $s$ succeed  in rendezvous 
at time $t^*$ by Rendezvous(LC-atomic ASYNC, Non-Rigid, any).
\end{lemma}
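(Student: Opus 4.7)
The plan is to reduce this case to Lemma~\ref{ABRendezvous}, which already handles the situation of a Look operation at which the two robots are observed with different colors (and the partner's color was fixed by a prior Compute) in general ASYNC with Non-Rigid movement; since LC-atomic ASYNC is a restriction of ASYNC, that lemma is available here. Informally, once we can exhibit such a Look we are essentially done, because the algorithm then drives the $A$-coloured robot toward the (stationary) $B$-coloured robot and Lemma~\ref{ABRendezvous} absorbs the non-rigid tail of the argument.

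First I would let $r$ denote the robot whose first LC operation after $t_c$ occurs earliest, and set $t_1 = t_c^+(r, LC)$; in case of a tie either choice works by the symmetry of the algorithm and of Lemma~\ref{ABRendezvous}. Because $t_c$ is a cycle start time, the next activity of each robot after $t_c$ is a Look, and by LC-atomicity this Look is fused with its following Compute into a single instant. Hence $s$ performs no Compute in $(t_c, t_1)$ (its first LC after $t_c$ has not yet happened, by choice of $r$) and $r$ also performs no Compute in $(t_c, t_1)$ (its first LC after $t_c$ is exactly $t_1$). Consequently $\ell(r, t_1) = \ell(r, t_c)$ and $\ell(s, t_1) = \ell(s, t_c)$, so the observed colors at $t_1$ are still distinct; one is $A$ and the other is $B$.

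Next I would check the hypotheses of Lemma~\ref{ABRendezvous} at $r$'s Look at $t_1$. The color of $s$ at $t_1$ was assigned either by $s$'s most recent Compute before $t_c$ or, if $s$ has never computed, at the initial time $t_0$; in either case $t_1^-(s, Comp) \leq t_c \leq t_1$. Combined with $\ell(r, t_1) \neq \ell(s, t_1)$, Lemma~\ref{ABRendezvous} applied to $r$'s Look at $t_1$ produces a time $t^* > t_1$ at which $r$ and $s$ rendezvous, which is the desired conclusion.

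The main subtlety I anticipate is pure bookkeeping. The WLOG choice of $r$ as the first mover is legitimate precisely because Lemma~\ref{ABRendezvous} treats the two assignments $(\ell(r),\ell(s)) = (A,B)$ and $(B,A)$ symmetrically, so whichever robot happens to act first can play the role of $r$ there. The LC-atomic hypothesis is used in exactly one place, namely in ruling out a Compute by the partner between $t_c$ and $t_1$; this is the step that would fail in unrestricted ASYNC, since in that setting $s$ could perform a Look shortly after $t_c$ and a Compute just before $t_1$, potentially changing its color and destroying the colour mismatch. Non-rigidity of the later Moves is not a concern because Lemma~\ref{ABRendezvous} is itself stated and proved for Non-Rigid movement.
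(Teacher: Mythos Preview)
Your proof is correct and takes the same approach as the paper, which disposes of this lemma in a single line (``obvious from Lemma~\ref{ABRendezvous}''); you have simply supplied the bookkeeping that verifies the hypotheses of that lemma at the first Look after $t_c$. One minor remark: your closing comment overstates the role of LC-atomicity, since once you pick $r$ to have the earliest first Look after the cycle start time $t_c$, the other robot cannot have performed \emph{any} operation (hence no Compute) before $t_1$ even in unrestricted ASYNC, so this particular lemma does not actually rely on LC-atomicity.
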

\begin{proof}
It is obvious from Lemma~\ref{ABRendezvous}.
\end{proof}

\begin{lemma}
\label{LCatomicAA}
If $\ell(r,t_c)=\ell(s,t_c)=A$ and the algorithm starts at $t_c$, then there exists a time $t^* (\geq t_c)$ such that $r$ and $s$ succeed  in rendezvous 
at time $t^*$ by Rendezvous(LC-atomic ASYNC, Non-Rigid, any) or $t^*$ is a cycle start time, $\ell(r,t^*)=\ell(s,t^*)=A$ and $dis(p(r,t^*),p(s,t^*)) \leq dis(p(r,t_c),p(s,t_c))-2\delta$.
\end{lemma}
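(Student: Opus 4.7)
The plan follows the same pattern as the proofs of Theorem~\ref{AsyncRigidA} and Lemma~\ref{RendInitB}. I set $t_1 = t_c^{+}(r, LC)$ and $t_2 = t_c^{+}(s, LC)$ and assume without loss of generality that $t_1 \leq t_2$. Since $t_c$ is a cycle start time with both lights $A$, neither robot moves in $[t_c, t_1]$, so at $t_1$ the first-acting robot $r$ sees both lights as $A$, flips $\ell(r)$ to $B$, and heads to the midpoint $m$ of $p(r,t_c)$ and $p(s,t_c)$.

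If $t_1 < t_2$, any further $LC$ of $r$ inside $(t_1, t_2)$ finds $\ell(r) = B$ and $\ell(s) = A$, and the case-$B$ branch with other $= A$ keeps $r$ both at light $B$ and in place; hence at $t_2$ the lights differ and $t_2^{-}(r, Comp) = t_1 \leq t_2$, so Lemma~\ref{ABRendezvous} applied to $s$'s Look at $t_2$ ends the argument. The harder case is $t_1 = t_2$: both robots simultaneously flip to $B$ and start moving toward $m$. Writing $d = dis(p(r, t_c), p(s, t_c))$, $t_r = t_1^{+}(r, Move_{E})$, and $t_s = t_1^{+}(s, Move_{E})$ with $t_r \leq t_s$ WLOG, the Non-Rigid guarantee gives each robot at least $\delta$ of travel along the straight segment to $m$, so $dis(p(r, t_s), p(s, t_s)) \leq d - 2\delta$, taking the value $0$ only when both reach $m$ (in which case Lemma~\ref{BBdontmove} immediately closes the argument).

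To continue past $t_s$, I split on whether $r$ performs any $LC$ in $(t_r, t_s)$. If yes, the first such $LC$ sees both lights as $B$ and flips $\ell(r)$ to $A$; the configuration $\ell(r) = A, \ell(s) = B$ is then preserved, since the case-$A$ branch with other $= B$ only updates the destination and the case-$B$ branch with other $= A$ only orders the $B$-robot to stay, so Lemma~\ref{ABRendezvous} applied to the next Look of either robot forces rendezvous. Otherwise $t_s$ is itself a cycle start time with both lights $B$ at distance $\leq d - 2\delta$, and I rerun the ordering argument on $t_s^{+}(r, LC)$ and $t_s^{+}(s, LC)$: if these differ, the earlier one flips the corresponding robot to $A$ via the case-$B$ branch with other $= B$ and Lemma~\ref{ABRendezvous} again finishes; if they coincide at some $t_6$, both robots simultaneously flip to $A$ without moving, so $t_6$ is a cycle start time with $\ell(r, t_6) = \ell(s, t_6) = A$ and $dis(p(r, t_6), p(s, t_6)) \leq d - 2\delta$, which realises the distance-decreasing alternative with $t^{*} = t_6$.

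The main obstacle I anticipate is the bookkeeping in the simultaneous case $t_1 = t_2$, where many interleavings of $LC$ and $Move$ during the two coupled movements toward $m$ must all be ruled out. The closure property that keeps the analysis finite is monotone: once the two lights become unequal, the algorithm never re-equalises them, so the hypotheses of Lemma~\ref{ABRendezvous} remain available at every subsequent Look; the only way to avoid rendezvous is thus to arrive at a cycle start time at which both lights have simultaneously returned to $A$, which is precisely the second alternative of the lemma and automatically inherits the mandatory $2\delta$ distance drop produced by the two coupled moves toward $m$.
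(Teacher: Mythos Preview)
Your proof is correct and follows essentially the same approach as the paper's: split on whether the first $LC$ operations coincide, invoke Lemma~\ref{ABRendezvous} whenever the lights differ, and in the simultaneous case compare the second $LC$ times to obtain either rendezvous or the $2\delta$ distance drop at a new cycle start time with both lights $A$. The paper's version is more direct---it compares $t_1^{+}(r,LC)$ and $t_1^{+}(s,LC)$ immediately without the intermediate layer of $Move_{E}$ times $t_r,t_s$---but your additional bookkeeping is sound and leads to the same case distinctions.
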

\begin{proof}
Let $r$ perform the $LC$ operation first and let $t_1=t_c^+(r,LC)$.
There are two cases;
(I) $t_1=t_c^+(s,LC)$, and (II) $t_1< t_c^+(s,LC)=t_2$.

(I)  In this case, $\ell(r,t_1)=\ell(s,t_1)=B$ and there are two cases;
(I-1) $t_1^+(r,LC) \neq t_1^+(s,LC)$, and
(I-2) $t_1^+(r,LC) =t_1^+(s,LC)$.

(I-1) This case can be proved by Lemma~\ref{ABRendezvous}.

(I-2) Letting $t^*=t_1^+(r,LC) =t_1^+(s,LC)$, $\ell(r,t^*)=\ell(s,t^*)=A$ and  $t_*$ becomes a cycle start time.
Also at the time $t_*$ rendezvous is succeeded or $dis(p(r,t^*),p(s,t^*)) \leq dis(p(r,t_c),p(s,t_c))-2\delta$.

(II) Since $\ell(r,t_1)=A$ and $\ell(s,t_2)=B$, this case is proved by Lemma~\ref{ABRendezvous}.
\end{proof}

\begin{lemma}
\label{LCatomicBB}
If $\ell(r,t_c)=\ell(s,t_c)=B$ and the algorithm starts at $t_c$, then  there exists a time $t^* (\geq t)$ such that $r$ and $s$ succeed  in rendezvous 
at time $t^*$ by Rendezvous(LC-atomic ASYNC, Non-Rigid, any) or $t^*$ is a cycle start time and $\ell(r,t^*)=\ell(s,t^*)=A$.
\end{lemma}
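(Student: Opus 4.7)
The plan is to mirror the case analysis in the proof of Lemma~\ref{LCatomicAA}, with the roles of the colors $A$ and $B$ interchanged. I would let $r$ be the robot that performs the first $LC$ operation after $t_c$, set $t_1 = t_c^+(r, LC)$, and then distinguish two cases: (I) $t_1 = t_c^+(s, LC)$, i.e.\ both robots' first $LC$ operations coincide, or (II) $t_1 < t_c^+(s, LC) = t_2$.

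In Case (I), at the common time $t_1$ each robot sees the other with color $B$, so by the else-clause of the $B$-branch of Algorithm~\ref{algo:Ren} each sets its own color to $A$ without computing a destination. Since neither robot changes position, and the trivial no-movement $Move$ of this cycle may be absorbed just before the next $Look$ by the convention stated in Section~\ref{sec:model}, the time $t_1$ is a cycle start time with $\ell(r,t_1)=\ell(s,t_1)=A$. Setting $t^* = t_1$ yields the second alternative of the conclusion.

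In Case (II), at $t_1$ the robot $r$ observes $\ell(s) = B$ (since $s$ has not yet acted in this cycle) and, being of color $B$ itself, changes its color to $A$ without moving. Hence $\ell(r,t_1)=A$, $\ell(s,t_1)=B$, and $p(r,t_1)=p(r,t_c)$. Any additional $LC$ operations that $r$ may perform during the interval $(t_1,t_2)$ fall into the $A$-branch with $\ell(s)=B$, and therefore only (re)compute a destination equal to the still-unchanged $p(s,t_c)$ without touching $r$'s color. Consequently, at $t_2$ we have $\ell(r,t_2)=A\neq B=\ell(s,t_2)$ and $t^-(r, Comp) \leq t_1 < t_2$, so Lemma~\ref{ABRendezvous} (applied with the names of the two robots swapped, which is legitimate by the symmetry of the algorithm) delivers rendezvous at some $t^* > t_2$, giving the first alternative.

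The only delicate point I expect is justifying in Case (I) that the ``trivial $Move$'' convention really makes $t_1$ a cycle start time in the sense of the definition, and checking in Case (II) that any intermediate $LC$s and motions of $r$ during $(t_1,t_2)$ leave the premises of Lemma~\ref{ABRendezvous} intact. Both issues rest only on the explicit statements in the model section and the algorithm's structure, so once the case split is fixed the argument is essentially bookkeeping rather than a substantive new obstacle.
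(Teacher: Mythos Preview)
Your proposal is correct and follows essentially the same approach as the paper: split on whether the two first $LC$ operations coincide, obtaining in Case~(I) a cycle start time with both lights~$A$, and in Case~(II) the hypothesis of Lemma~\ref{ABRendezvous}. Your write-up is in fact more detailed than the paper's, which dispatches both cases in one sentence each; the only slip is the inequality $t^-(r,Comp)\leq t_1$, which should read $t^-(r,Comp)< t_2$ (if $r$ performs additional $LC$s in $(t_1,t_2)$ the last one may lie strictly after $t_1$), but this does not affect the applicability of Lemma~\ref{ABRendezvous}.
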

\begin{proof}
Let $t_1=t_c^+(r,LC)$ and $t_c^+(s,LC)$.
If $t_1 =t_2$, then letting $t^*=t_1$, $t^*$ is a cycle start time and $\ell(r,t^*)=\ell(s,t^*)=A$.
Otherwise, Lemma~\ref{ABRendezvous} proves this case.
\end{proof}

Lemmas~\ref{RendInitB}-\ref{LCatomicBB} is followed by the next theorem.

\begin{theorem}
\label{AsyncLCatomicNonRigidAny}
Rendezvous(LC-atomic ASYNC, Non-Rigid, any) solves Rendezvous.
\end{theorem}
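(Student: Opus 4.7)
The plan is to combine Lemmas \ref{LCatomicAB}, \ref{LCatomicAA}, and \ref{LCatomicBB} by a case analysis on the colors at a cycle start time, using the distance decrease in the AA case as a termination measure. Since any execution begins at the trivial cycle start time $t_0$ (where no robot has yet performed an operation), I would start by showing that from any cycle start time $t_c$, the execution either achieves rendezvous or reaches a subsequent cycle start time $t_c'$ whose configuration is closer to rendezvous in a well-defined sense.

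First I would dispatch the easy cases. If $\ell(r,t_c)\neq\ell(s,t_c)$, Lemma~\ref{LCatomicAB} immediately finishes. If $\ell(r,t_c)=\ell(s,t_c)=B$, Lemma~\ref{LCatomicBB} either finishes or produces a cycle start time $t_c'$ with both colors $A$; this reduces the BB case to the AA case. So the whole argument reduces to iterating the AA case supplied by Lemma~\ref{LCatomicAA}.

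Next I would set up the termination argument. Let $D_0=dis(p(r,t_c),p(s,t_c))$ at a cycle start time $t_c$ where both colors are $A$. Applying Lemma~\ref{LCatomicAA} repeatedly either produces rendezvous at some step, or gives an infinite sequence of cycle start times $t_c=t^{(0)}<t^{(1)}<\cdots$ all with both colors $A$ and with $dis(p(r,t^{(k+1)}),p(s,t^{(k+1)}))\le dis(p(r,t^{(k)}),p(s,t^{(k)}))-2\delta$. The latter is impossible once $k>D_0/(2\delta)$, since the distance cannot become negative. Hence after at most $\lceil D_0/(2\delta)\rceil$ iterations the lemma's rendezvous alternative must trigger. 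I would also note that at the final AA step where $dis\le 2\delta$, the midpoint is within $\delta$ of each robot, so by the Non-Rigid movement rule (a robot whose destination is at distance $\le\delta$ reaches it) both robots actually arrive at the midpoint, which makes the rendezvous conclusion explicit rather than abstract.

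Putting these pieces together yields the theorem: in every case the execution terminates in rendezvous after finitely many cycle start times. The only mildly delicate point, and the one I would be most careful about, is verifying that the sequence $t^{(0)}<t^{(1)}<\cdots$ produced by chaining Lemma~\ref{LCatomicAA} is legitimate — i.e.\ that at each $t^{(k)}$ we are genuinely at a cycle start time with both colors $A$, so the lemma can be reapplied with the updated configuration as a new starting point. This is a direct rereading of the lemma statement, but it is the step that must be spelled out so the induction on $k$ (with the potential function $dis(p(r,t^{(k)}),p(s,t^{(k)}))$ decreasing by at least $2\delta$ per step) is formally valid.
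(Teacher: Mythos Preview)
Your proposal is correct and follows the same route as the paper: the paper simply states that the theorem follows from Lemmas~\ref{LCatomicAB}, \ref{LCatomicAA}, and \ref{LCatomicBB} (plus the earlier Lemma~\ref{RendInitB}, whose role is superseded in the Non-Rigid setting by the other three), and you have made the implicit combination explicit---the case split on colors at a cycle start time, the reduction of the $BB$ case to the $AA$ case via Lemma~\ref{LCatomicBB}, and the potential-function termination argument iterating Lemma~\ref{LCatomicAA} with the $2\delta$ decrease. Your extra remark about the last $AA$ step (midpoint within $\delta$, hence reached exactly under Non-Rigid) is a nice clarification but not needed, since Lemma~\ref{LCatomicAA} already guarantees the rendezvous alternative once the distance can no longer drop by $2\delta$.
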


\subsubsection{ASYNC and Non-Rigid movement(+$\delta$)\\}

Although it is still open whether asynchronous Rendezvous can not be solved in Non-rigid with two colors of lights,
if we assume Non-Rigid($+\delta$), we can solve Rendezvous  modifying  Rendezvous(ASYNC, Non-Rigid($+\delta$), A) and
using the minimum moving value $\delta$ in it.

\Newcodeline
\begin{algorithm}[h]
\caption{RendezvousWithDelta (ASYNC, Non-Rigid($+\delta$), $A$)}
\label{algo:RenWDelta}
{\footnotesize
\begin{tabbing}
111 \= 11 \= 11 \= 11 \= 11 \= 11 \= 11 \= \kill
{\em Assumptions}: full-light, two colors ($A$ and $B$) \crm\crm

\Cl \> {\bf case} $dis(me.position, other.position)(=DIST)$ {\bf of } \crm

\Cl \> $DIST>2\delta$: \crm
\Cl \> \> {\bf if} me.light =other.light =$B$ {\bf then} \crm
\Cl \> \> \>$me.des \leftarrow$ the point moving  by $\delta/2$ from $me.position$ to $other.position$\crm
\Cl \> \>{\bf else} $me.light \leftarrow B$ \crm 
\Cl \> $2\delta \geq DIST \geq \delta$: \crm
\Cl \>\> {\bf if} $me.light=other.light = A$ {\bf then} \crm
\Cl \>\>\>$me.light \leftarrow B$ \crm
\Cl \>\>\>$me.des \leftarrow$ the midpoint of $me.position$ and $other.position$\crm
 \Cl \> \> {\bf else} $me.light \leftarrow A$ \crm 
\Cl \> $\delta>DIST$:  //Rendezvous(ASYNC, Rigid, A) \crm

\Cl \>\> {\bf case} me.light   {\bf of } \crm

\Cl \> \>$A$: \crm
\Cl \> \>\> {\bf if} other.light =$A$ {\bf then} \crm
\Cl \> \> \>\>$me.light \leftarrow B$ \crm
\Cl \> \> \>\>$me.des \leftarrow$ the midpoint of $me.position$ and $other.position$\crm
\Cl \> \>\>{\bf else} $me.des \leftarrow other.position$ \crm 
\Cl \>\> $B$: \crm
\Cl \>\>\> {\bf if} $other.light = A$ {\bf then} $me.des \leftarrow me.position$ // stay\crm
\Cl \> \> \>{\bf else} $me.light \leftarrow A$ \crm 

\Cl \> \>{\bf endcase} \crm
\Cl \> {\bf endcase} 
\end{tabbing}
}
\end{algorithm}

Let $dist_0 = dis(p(r,t_0),p(s,t_0))$ and let
RendezvousWithDelta (Algorithm~\ref{algo:RenWDelta})  begin
with $\ell(r,t_0)=\ell(s,t_0)=A$. If $dist_0 >2\delta$, both robots do not move until both colors of lights become $B$({\bf lines} 3-5)
and there exists a cycle start time $t_1(> t_0)$ such that $\ell(r,t_1)=\ell(s,t_1)=B$.
After $\ell(r,t_1)=\ell(s,t_1)=B$, the distance between $r$ and $s$ is reduced by $\delta/2$ without changing the colors of lights({\bf line} 4) and
the distance falls in $[2\delta, \delta]$ and both colors of lights become $A$ at a cycle starting time $t_2$.
After $\ell(r,t_2)=\ell(s,t_2)=A$, we can use Rendezvous(ASYNC, Rigid, A) since $2\delta \geq dis(p(r,t_2),p(s,t_2)) \geq \delta$. Therefore,
rendezvous is succeeded.
Note that in Algorithm~\ref{algo:RenWDelta}, the initial pair of colors of $r$ and $s$ is $(\ell(r,t_0),\ell(s,t_0))=(A,A)$ and
it is changed into $(\ell(r,t_1),\ell(s,t_1))=(B,B)$ without changing the distance of $r$ and $s$. And it is changed into $(\ell(r,t_2),\ell(s,t_2))=(A,A)$
when the distance becomes between $\delta$ and $2\delta$. These {\em mode} changes are necessary and our algorithm does not work correctly, if these mode changes are not incorporated in the algorithm.
 
 \begin{figure*}
 \begin{center}
    \includegraphics[scale=0.8]{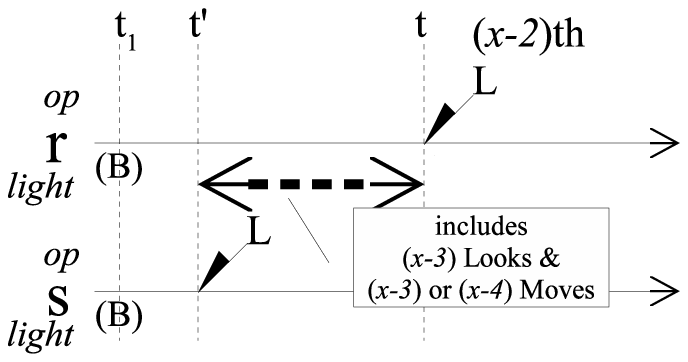}
    \includegraphics[scale=0.8]{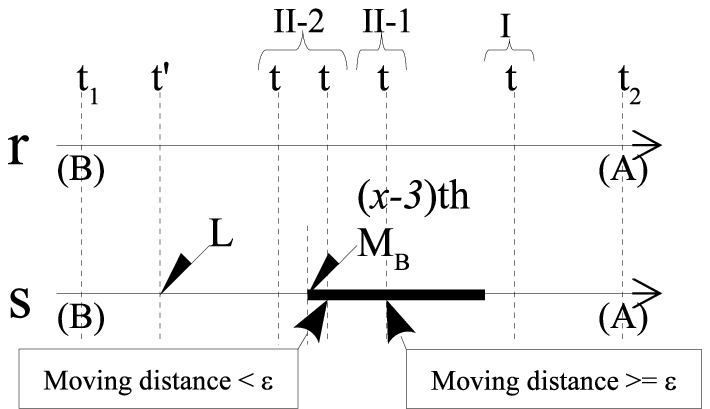}
    \caption{Situations in the proof of Lemma~\ref{distgeq2delta}}
    \label{figCaseLemma7}
  \end{center}
\end{figure*} 

\begin{lemma}\label{distgeq2delta}
If $dist_0 > 2\delta$,  in any execution of RedezvousWithDelta(ASYNC, Non-Rigid($+\delta$), $A$), 
\begin{enumerate}
\item[(1)] there exists a cycle start time $t_1(> t_0)$ such that $\ell(r,t_1)=\ell(s,t_1)=B$ and $dis(p(r,t_1),p(s,t_1))=dist_0$, and
\item[(2)] there exists a cycle start time $t_2 (> t_1)$ such that $\ell(r,t_2)=\ell(s,t_2)=A$ and $2\delta \geq dis(p(r,t_2),p(s,t_2)) \geq \delta$.
\end{enumerate}
\end{lemma}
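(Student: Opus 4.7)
The plan is to handle parts (1) and (2) separately, relying on a careful case analysis of Algorithm~\ref{algo:RenWDelta} together with a geometric distance argument.

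For part (1), I first observe that when the distance exceeds $2\delta$, the only branch of Algorithm~\ref{algo:RenWDelta} that schedules a Move is line 4, which requires both lights to be $B$; whenever at least one light is still $A$, line 5 fires and only sets the robot's own light to $B$ without assigning a destination (so the Move is omitted). Hence, starting from $t_0$ with both lights $A$ and distance $dist_0 > 2\delta$, no robot can move before both lights have become $B$. Letting $\tau_r$ and $\tau_s$ denote the first Compute times of $r$ and $s$ after $t_0$, both lights are $B$ at $\max(\tau_r,\tau_s)$ and the distance is still $dist_0$; I will then take $t_1$ to be the first cycle start time after this moment and before the first Move begins.

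For part (2), I start from $t_1$ in the state $(\ell(r)=\ell(s)=B,\, d=dist_0>2\delta)$, where every Compute is in line 4 and schedules a Move of length exactly $\delta/2$ toward the other robot's snapshot position. Since $\delta/2 \leq \delta$, Non-Rigid$(+\delta)$ guarantees each such Move is completed in full. The key geometric facts I will use are: (a) the triangle inequality forces each single Move to change the distance by at most $\delta/2$; and (b) because the moving robot travels along the segment between its current position and the snapshot position of the other, any snapshot the other robot takes during that Move sees the moving robot on the original line through $p(r)$ and $p(s)$, so both robots' destinations land on this same segment and a round in which both robots move reduces the distance by exactly $\delta$. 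Fairness then forces the distance to drop to at most $2\delta$ in finitely many activations, and since it cannot drop by more than $\delta/2$ per Move it lies in $(3\delta/2,2\delta] \subseteq [\delta,2\delta]$ at the moment of crossing; moreover, line 4 never alters lights, so the lights remain $(B,B)$ up to that moment. From then on, any Compute with a snapshot showing $d \leq 2\delta$ falls in case 2, and from $(B,B)$ or from a mixed state it takes the else branch (line 10), which simply sets the own light to $A$ without moving. After each robot has performed one such Compute, the configuration is $(A,A)$ with the distance unchanged, so the first cycle start time of this regime serves as $t_2$.

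The main obstacle will be the ASYNC timing in part (2): at the instant the distance crosses $2\delta$ one robot may be midway through an ``in-flight'' case-1 Move whose Compute was based on an older snapshot with $d>2\delta$, and this Move will still fire even after the other robot has already transitioned into case 2. I will handle this by observing that such an in-flight Move, being a single Move, can reduce the distance by at most a further $\delta/2$, giving a total reduction of at most $\delta$; since we started above $2\delta$, the distance remains strictly above $\delta$ throughout. Once all in-flight Moves terminate, no subsequent Compute can schedule a case-1 Move (every fresh snapshot shows $d \leq 2\delta$), and the finitely many remaining case-2 else Computes drive both lights to $A$ without any motion, giving the required cycle start time $t_2$ with $\ell(r,t_2)=\ell(s,t_2)=A$ and $\delta \leq dis(p(r,t_2),p(s,t_2)) \leq 2\delta$.
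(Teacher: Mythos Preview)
Your plan is correct and tracks the paper's proof closely.  Part~(1) is handled identically: both you and the paper take $t_1=\max(\tau_r,\tau_s)$ and check that any Compute still pending at that moment saw the other light equal to $A$ and therefore neither moved nor recoloured, so $t_1$ is a cycle start time with the distance still $dist_0$.

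For part~(2) the two write-ups differ only in bookkeeping.  The paper writes $dist_0=x(\delta/2)+\epsilon$ and pivots on the $(x-2)$-th Look operation, splitting into cases according to whether the $(x-3)$-th Move has already completed; you instead follow the distance until it first drops to $2\delta$ and then bound the residual effect of the at most two still-pending case-1 Moves.  Both arguments rest on exactly the same facts---every case-1 Move has length $\delta/2$ and is executed rigidly, all motion stays on the initial segment, and once the actual distance is $\le 2\delta$ at most one stale case-1 Compute per robot can still fire---so the two organisations are interchangeable.  Your crossing-time formulation is arguably cleaner, while the paper's counting makes the transition point explicit.

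One small wording issue to tighten when you write it out: at the crossing instant \emph{both} robots may be mid-Move, so the intermediate claim ``it lies in $(3\delta/2,2\delta]$ at the moment of crossing'' is not literally true.  The robust statement (which you effectively use anyway) is that at the later of the two last case-1 Looks the distance exceeds $2\delta$, and after the at most two remaining $\delta/2$-moves it is still strictly greater than $\delta$; this matches the paper's handling of its case~(II-2).
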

\begin{proof}
{\em (1)} Without loss of generality, $r$ performs the $Look$ operation first and let $t_{rL}$ be such time. 
The color of $r$ is changed from $A$ to $B$ at a time $t_{rC}$.
Since $\ell(s, t_0)=A$, $s$ performs a $Look$ operation at a time $t_{sL} (\geq t_{rL})$ and changes its color from $A$ to $B$ at a time $t_{sC}$.  
Then, let $t_1=max(t_{rC},t_{sC})$. If a $Comp$ operation is performed immediately after $t_1$, the robot does not change its color of light,
since the robot performs the preceding $Look$ operation before $t_1$.
Thus, $t_1$ becomes a cycle start time.

{\em (2)} Since $t_1$ is a cycle start time, we can consider that
the algorithm starts at $t_1$ with $\ell(r,t_1)=\ell(s,t_1)=B$.
The distance $dist_0$ is reduced by $\delta/2$ every one cycle of each robot after $t_1$.
Since $dist_0 > 2\delta$, $dist_0$ can be denoted as $x(\delta/2)+\epsilon$, where $x \geq 4$ and $0 \leq \epsilon < \delta/2$.

Let $t$ be a time of the $(x-2)$-th $Look$ operation among $Look$ operations $r$ and $s$ performed  after $t_1$ and without loss of generality,
let $r$ be the robot performing the $(x-2)$-th $Look$ operation. 
Note that among $(x-3)$ $Move$ operations between $t_1$  and $t$ at least $max(0,x-4)$ $Move$ operations have been completed and at most one $Move$ operation has not completed yet.

Let $t'=t^-(s,Look)$\footnote{If $s$ performed no $Look$ operations after $t_1$, $t'=t_1$.}.
We have
two situations(Fig.~\ref{figCaseLemma7}). One is that (I) $(x-3)$ $Move$ operations are completed until $t$. This case satisfies $2\delta \geq dis(p(r,t),p(s,t)) \geq \delta$.
The other is that (II) the $(x-3)$-th $Move$ operation $s$ performs has not been completed at $t$
\footnote{This case includes $t < t'^+(s, Move_{B})$}. The latter case is divided into two cases 
 (II-1) $2\delta  \geq dis(p(r,t),p(s,t)) \geq \delta$ and
(II-2) $dis(p(r,t),p(s)) > 2\delta$
according to the time $r$ performs the $Look$ operation.

{\bf Case (I) and (II-1):} 
Since $2\delta \geq dis(p(r,t),p(s,t)) \geq \delta$, $r$ changes its color of light to $A$ at $t^+(r,Comp)$.
When $s$ performs a $Look$ operation at $t_{sL}=t^+(s,Look)$, $s$ observes  $2\delta \geq dis(p(r,t_{sL}),p(s,t_{sL})) \geq \delta$ and $\ell(s,t_{sL})=B$ and
changes its color of light to $A$ at $t_{sC}=t^+(s,Comp)$.  
Letting $t_2=max(t_{rC},t_{sC})$, $t_2$ becomes a cycle start time as follows.

When $t_2=T_{rC}$, $s$ changes its color of light to $A$ at $t_{sC} ( \leq t_{rC})$. Even if 
$s$ performs a $Look$ operation at $t_L$ after $t_{sC}$ before $t_2=t_{rC}$, $s$ does not change its color at $t_L^+(s,Comp)$ since $\ell(r,t_L)=B$.
The case that $_2=T_{sC}$ can be proved similarly.

{\bf Cases (II-2):} 
Since $dis(p(r,t),p(s,t)) > 2\delta$, $r$ reduces the distance by $\delta/2$.
Then, $r$ performs the $Move$ operation and 
subsequently performs the next $Look$ operation at $t_{rL}=t^+(r, Look)$ then changes its color of light to $A$ at $t_{rC}=t^+(r,Comp)$,
since $\delta \leq dis(p(r,t_{rL}),p(s,t_{rL})) \leq 2\delta$. 
The next $Look$ operation of $s$ is performed after $t'^+(s,Move_E)$ and $s$
changes its color of light to $A$ at $t_{sC}=t^+(s,Comp)$. 
Robot $s$ changes its color of light to $A$ at $t_{sC}$.
Letting $t_2=max(t_{rC},t_{sC})$, we can prove that $t_2$ becomes a cycle start time similar to the former case. 
%

%
\end{proof}

The followintg two lemmas can be proved similar to the proof of Theorem~\ref{AsyncRigidA}.

\begin{lemma}\label{distgeqdelta}
If $2\delta \geq dist_0 \geq \delta$, then RedezvousWithDelta(ASYNC, Non-Rigid($+\delta$), $A$) solves Rendezvous.
\end{lemma}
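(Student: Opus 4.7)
The plan is to adapt the case analysis of the proof of Theorem~\ref{AsyncRigidA} to the current setting, leveraging a key simplification: when $d_0 \in [\delta, 2\delta]$ and both initial lights are $A$, every destination that Algorithm~\ref{algo:RenWDelta} computes lies at distance at most $\delta$ from the moving robot. In its second case with both lights $A$, that destination is the midpoint $m$, at distance $d_0/2 \le \delta$; in its third case (distance $< \delta$) any computed destination also lies within $\delta$. The Non-Rigid$(+\delta)$ guarantee therefore ensures every such move completes, so Algorithm~\ref{algo:RenWDelta} behaves effectively like a Rigid algorithm throughout this regime, which is exactly what is needed to lift Theorem~\ref{AsyncRigidA}'s argument.

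Without loss of generality let $r$ perform $Look$ first at time $t_1 = t_0^+(r, Look)$; then $r$ sets $\ell(r) \leftarrow B$ and moves to the midpoint $m = (p(r, t_0) + p(s, t_0))/2$, arriving at $t_3 = t_1^+(r, Move_E)$. I would then split on $t_2 = t_0^+(s, Look)$. In Case (I), $t_1 \le t_2 < t_1^+(r, Comp)$, so $s$ sees the same snapshot as $r$, commits to the same midpoint $m$ with light $B$, and also reaches $m$. If the faster arrival performs an extra cycle before the other arrives, the observed distance is at most $d_0/2 \le \delta$; a single light flip from $(B,B)$ to $(A,B)$ or $(B,A)$ in the third case then satisfies the hypothesis of Lemma~\ref{ABRendezvous} (applied through the third-case identity between Algorithm~\ref{algo:RenWDelta} and Algorithm~\ref{algo:Ren}), which closes the argument. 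In Case (II), $t_1^+(r, Comp) \le t_2$, so $\ell(r, t_2) = B \ne A = \ell(s, t_2)$ and $t^-(r, Comp) \le t_2$. Inspecting Algorithm~\ref{algo:RenWDelta}'s branches for the light pair $(A, B)$ shows $s$ either stays (second-case else) or moves toward $r$ (third-case inner-$A$ else), so $s$ never retreats; since $r$'s motion drives the inter-robot distance down to $d_0/2 \le \delta$, the system eventually lands in the third case with differing lights, and Lemma~\ref{ABRendezvous} concludes as before.

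The hard part is the boundary $d_0 = 2\delta$, where after $r$ reaches $m$ the distance can remain exactly $\delta$, keeping the configuration in the second case whose else branches only reset lights without moving. I plan to show the resulting chain of light flips terminates: starting from $(B, A)$ or $(A, B)$ at distance $\delta$, the else branch forces the non-$A$ robot's light to $A$, producing a fresh $(A, A)$ state at distance $\delta$; the next midpoint move then shrinks the distance to $\delta/2 < \delta$, dropping the system into the third case where Lemma~\ref{ABRendezvous} finishes the rendezvous. Establishing this finite escape from the $\delta$-boundary --- ruling out any parity loop between $(A, B)$ and $(B, A)$ --- is the technical crux of the proof.
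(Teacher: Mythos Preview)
Your proposal is correct and follows the same approach as the paper, which simply states that the lemma ``can be proved similar to the proof of Theorem~\ref{AsyncRigidA}.'' Your key observation---that every computed destination lies within $\delta$ of the moving robot, so Non-Rigid$(+\delta)$ behaves as Rigid in this regime---is precisely the mechanism that makes the adaptation go through, and your explicit treatment of the boundary $d_0 = 2\delta$ goes beyond the detail the paper itself provides.
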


\begin{lemma}\label{distleqdelta}
If $dist_0 >\delta$, then RedezvousWithDelta(ASYNC, Non-Rigid($+\delta$), $A$) solves Rendezvous.
\end{lemma}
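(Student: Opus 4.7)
\medskip
\noindent\textbf{Proof proposal for Lemma~\ref{distleqdelta}.}
Given the hypothesis $dist_0>\delta$, the initial configuration falls into exactly one of the two cases already handled by the preceding lemmas: either (a) $dist_0>2\delta$, or (b) $2\delta\geq dist_0\geq\delta$. My plan is to reduce the statement to these two lemmas, treating (b) as immediate and (a) as a two-stage reduction via a designated cycle start time.

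In case (b), Lemma~\ref{distgeqdelta} applies verbatim with the starting time $t_0$, since the initial lights are both $A$ by the parameter of the algorithm, so rendezvous is achieved. In case (a), I would first invoke Lemma~\ref{distgeq2delta}: its part~(2) delivers a cycle start time $t_2>t_0$ at which $\ell(r,t_2)=\ell(s,t_2)=A$ and the distance satisfies $2\delta\geq dis(p(r,t_2),p(s,t_2))\geq\delta$. Because $t_2$ is a cycle start time (both robots are about to perform Look as their next operation, and any pending operations neither change lights nor move), the subsequent execution of RendezvousWithDelta from $t_2$ is indistinguishable from a fresh execution of RendezvousWithDelta(ASYNC, Non-Rigid($+\delta$), $A$) with initial distance $dist_{t_2}=dis(p(r,t_2),p(s,t_2))\in[\delta,2\delta]$. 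Applying Lemma~\ref{distgeqdelta} to this shifted start then yields rendezvous.

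The main point that needs care is the shift-of-origin argument used in case (a): I have to justify that the statement of Lemma~\ref{distgeqdelta}, which is phrased in terms of $t_0$ and $dist_0$, remains valid when we re-anchor ``time zero'' at an arbitrary cycle start time with both lights set to $A$. This is the standard kind of argument enabled by the definition of cycle start time earlier in Section~\ref{sec:RendezvousAlgorithms}: at such a time, no robot is mid-computation with a pending color change or displacement, so the algorithm's state is fully captured by the pair of positions and lights, and fairness of the asynchronous scheduler after $t_2$ is preserved. Once this shift is acknowledged, the two cases combine into the claim that whenever $dist_0>\delta$ the algorithm succeeds, which is the lemma as stated.
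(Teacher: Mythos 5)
Your two-case reduction is internally sound, but it proves the statement as literally printed rather than the statement the paper needs, and the printed hypothesis is almost certainly a typo. Look at the three-way case split of Algorithm~\ref{algo:RenWDelta} ($DIST>2\delta$, $2\delta\geq DIST\geq\delta$, $\delta>DIST$), at the label of this lemma, and at the fact that Lemmas~\ref{distgeq2delta}--\ref{distleqdelta} together are supposed to yield Theorem~\ref{ASYNCNonRigidDeltaA} for an arbitrary initial distance: the intended hypothesis here is $\delta>dist_0$, i.e.\ the initial distance is \emph{below} $\delta$. Your proof never touches that regime, so if it were adopted as the proof of this lemma, the case $dist_0<\delta$ would be covered nowhere and Theorem~\ref{ASYNCNonRigidDeltaA} would not follow. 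Moreover, what you do prove --- combine Lemma~\ref{distgeq2delta}(2) with a shift of the time origin to the cycle start time $t_2$ and then apply Lemma~\ref{distgeqdelta} --- is precisely the glue argument by which the paper derives Theorem~\ref{ASYNCNonRigidDeltaA} from the three lemmas; as the content of Lemma~\ref{distleqdelta} itself it adds nothing beyond the two preceding lemmas.

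The argument the paper intends (it only remarks that the lemma ``can be proved similar to the proof of Theorem~\ref{AsyncRigidA}'') is the following. When $dist_0<\delta$, every Look returns a distance below $\delta$, so both robots always execute the branch $\delta>DIST$ of Algorithm~\ref{algo:RenWDelta}, which is verbatim Rendezvous(ASYNC, Rigid, $A$). Under that code every computed destination is either the midpoint of the two observed positions or the other robot's observed position, so both robots stay on the segment spanned by the initial positions and their mutual distance never exceeds $dist_0<\delta$; in particular every computed move has length less than $\delta$. Under Non-Rigid(+$\delta$) semantics a move of length at most $\delta$ is always completed, so the execution is indistinguishable from one with Rigid movement, and Theorem~\ref{AsyncRigidA} applies directly. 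That observation --- not the reduction to Lemmas~\ref{distgeq2delta} and~\ref{distgeqdelta} --- is the missing content.
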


Lemmas~\ref{distgeq2delta}-\ref{distleqdelta} imply the following theorem.

\begin{theorem}\label{ASYNCNonRigidDeltaA}
RedezvousWithDelta(ASYNC, Non-Rigid($+\delta$), $A$) solves Rendezvous.
\end{theorem}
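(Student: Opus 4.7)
The natural plan is to do a case analysis on the initial distance $dist_0 = dis(p(r,t_0),p(s,t_0))$, mirroring exactly the three branches of the \texttt{case} statement in Algorithm~\ref{algo:RenWDelta}. The three branches correspond to $dist_0 > 2\delta$, $2\delta \geq dist_0 \geq \delta$, and $dist_0 < \delta$, and Lemmas~\ref{distgeq2delta}, \ref{distgeqdelta}, and \ref{distleqdelta} respectively are exactly tailored to these regions. So the overall proof should be a short orchestration that invokes the three preceding lemmas.

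For the easy cases I would argue as follows. If $2\delta \geq dist_0 \geq \delta$, then Lemma~\ref{distgeqdelta} applies directly and yields rendezvous. If $dist_0 < \delta$, Lemma~\ref{distleqdelta} applies (the region in which the algorithm falls through to the embedded Rendezvous(ASYNC, Rigid, A) code of Algorithm~\ref{algo:Ren}, whose correctness is Theorem~\ref{AsyncRigidA}). For the main case $dist_0 > 2\delta$, the plan is to chain the lemmas: Lemma~\ref{distgeq2delta} produces first a cycle start time $t_1$ at which both lights are $B$ and the distance is still $dist_0$, and then a cycle start time $t_2$ at which both lights are $A$ and $2\delta \geq dis(p(r,t_2),p(s,t_2)) \geq \delta$. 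Regarding $t_2$ as a fresh start with both colors $A$ and distance in the interval $[\delta,2\delta]$ reduces the situation precisely to the hypothesis of Lemma~\ref{distgeqdelta}, which finishes the proof.

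The one subtlety I would want to verify is that the "restart from $t_2$" argument is legitimate. That is, even though the algorithm does not literally reinitialize, at time $t_2$ we have a genuine cycle start time with $\ell(r,t_2) = \ell(s,t_2) = A$ and the distance lying in the $[\delta,2\delta]$ region, so from $t_2$ onwards each robot, when it next performs \emph{Look}, will execute the middle branch $2\delta \geq DIST \geq \delta$ of Algorithm~\ref{algo:RenWDelta}. This is the branch whose behaviour Lemma~\ref{distgeqdelta} analyzes, so its conclusion can be applied with $t_2$ in the role of $t_0$. The hardest bookkeeping step, which Lemma~\ref{distgeq2delta} already handles, is making sure the "mode change" from $(B,B)$ back to $(A,A)$ really does occur at a cycle start time rather than in the middle of somebody's pending \emph{Move}; once that is granted, the theorem is essentially just a three-line citation of the three lemmas.
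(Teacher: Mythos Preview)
Your proposal is correct and matches the paper's own proof, which consists of the single sentence ``Lemmas~\ref{distgeq2delta}--\ref{distleqdelta} imply the following theorem.'' You have simply spelled out the case split and the chaining from Lemma~\ref{distgeq2delta} into Lemma~\ref{distgeqdelta} in more detail than the paper does, but the structure is identical.
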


\section{Concluding Remarks}
\label{sec:conclusion}

We have shown that Rendezvous can be solved in ASYNC with the optimal number of colors of lights if Rigid  or Non-Rigid($+\delta$) movement is assumed.
We have also shown that Rendezvous can be solved in ASYNC and Non-Rigid with  the optimal number of colors of lights if ASYNC is LC-atomic.
Although we conjecture that Rendezvous cannot be solved in ASYNC and Non-Rigid with $2$ colors,
it is still open whether it can be solved or not.

\paragraph*{Acknowledgment}
This work is supported in part by KAKENHI no. 17K00019 and 15K00011.


\newpage
\appendix

\end{document}